\title{On the relative power of reduction notions in arithmetic circuit complexity}
\author{Christian Ikenmeyer\\Max Planck Institute for Informatics, Saarland Informatics Campus, Germany \and Stefan Mengel\\CNRS, CRIL UMR 8188, France}
\numberwithin{equation}{section}
\newtheorem{lemma}[equation]{Lemma}
\newtheorem{theorem}[equation]{Theorem}
\newtheorem{corollary}[equation]{Corollary}
\newcommand{\HC}{\mathrm{HC}}
\newcommand{\vnp}{\mathsf{VNP}}	
\newcommand{\vp}{\mathsf{VP}}	
\newcommand{\sP}{\mathsf{\#P}}
\newcommand{\FF}{\mathbb{F}}
\newcommand{\con}{\mathrm{con}}
\newcommand{\Sp}{\textup{p}}
\newcommand{\Sc}{\textup{c}}
\begin{document}

\maketitle

\begin{abstract}
We show that the two main reduction notions in arithmetic circuit complexity, \Sp-projections and \Sc-reductions, differ in power. We do so by showing unconditionally that there are polynomials that are $\vnp$-complete under \Sc-reductions but not under \Sp-projections. We also show that the question of which polynomials are $\vnp$-complete under which type of reductions depends on the underlying field.
\end{abstract}

{\footnotesize
\noindent\textbf{Keywords:} arithmetic circuits, reductions, p-projection, c-reduction, Hamiltonian cycle polynomial%
\\[1ex]
\noindent\textbf{2010 Mathematics Subject Classification:} 68Q15.
\\[1ex]
\noindent\textbf{2012 ACM Computing Classification System:} Theory of computation -- Computational complexity and cryptography -- Problems, reductions and completeness
}

\section{Introduction}
While there is a plethora of different reduction notions that have been studied in computational complexity (see e.g.~\cite{hemaspaandraO02} for an overview), it has often been observed that in nearly all $\mathsf{NP}$-completeness proofs in the literature logarithmic space many-one reductions suffice. 
In contrast to $\mathsf{NP}$-completeness, many $\sP$-hardness results in counting complexity are not shown with many-one reductions but with the more permissive Turing-reductions and the $\sP$-hardness under many-one reductions remains an open problem.
It is natural to ask if there is a fundamental difference between both $\sP$-hardness notions.
Note that the question of the relative power of reduction notions for $\mathsf{NP}$-completeness has been studied and there are known separations under different complexity assumptions, see e.g.~the survey~\cite{Pavan03}.

In this short note, we answer an analogous question for arithmetic circuit complexity, the algebraic sibling of counting complexity. In arithmetic circuit complexity the most usual reduction notion are so-called \Sp-projections. Despite being  very restricted, \Sp-projections have been used to show nearly all of the completeness results in the area since the ground-breaking work of Valiant~\cite{valiant79}. It was only more recently that \Sc-reductions, a more permissive notion more similar to Turing- or oracle-reductions, have been defined in \cite{bur00} and used for some results (see e.g.~\cite{BriquelK09,Rugy-Altherre12,DurandMMRS14}).
Again the question comes up if there is a fundamental difference between these two notions of reductions. In fact, it was exactly this uncertainty about the relative power of \Sp-projections and \Sc-reductions that motivated the recent work Mahajan and Saurabh~\cite{MahajanS16}:
For the first time they prove a natural problem complete for the arithmetic circuit class $\vp$ under \Sp-projections,
where before there existed only such result under \Sc-reductions.

In this paper we answer the question of the relative strength of of \Sp-projections and \Sc-reductions:
We show unconditionally that over every field $\mathbb{F}$ there are explicit families of polynomials that are $\vnp$-complete over $\mathbb{F}$ under \Sc-reductions that are not $\vnp$-complete over $\mathbb{F}$ under \Sp-projections.
We also show that the question which polynomials are complete under which reductions depends on the underlying field in a rather subtle way.
It is a well known phenomenon that the permanent family,
which is $\vnp$-complete under \Sp-projections over fields of characteristic different from 2,
is contained in $\vp$ over fields of characteristic 2 and thus likely not $\vnp$-hard there.
We present a more subtle situation:
We give an explicit family of polynomials that is $\vnp$-complete under \Sc-reductions over all fields with more than 2 elements
and that is even $\vnp$-complete under \Sp-projections over a large class of fields including the complex numbers,
but over the real numbers it is only $\vnp$-complete under \Sc-reductions and not under \Sp-projections.

\paragraph*{Acknowledgements.} The authors would like to thank Dennis Amelunxen for helful discussions. Some of the research leading to this article was performed while the authors were at the Department of Mathematics at the University of Paderborn and at Texas A\&M University.

\section{Preliminaries}
We only give some very minimal notions of arithmetic circuit complexity. For more details we refer the reader to the very accessible recent survey~\cite{Mahajan14}. 

The basic objects to be computed in arithmetic circuit complexity are polynomials. More precisely, one considers so called \Sp-families of polynomials, which are sequences $(f_n)$ of multivariate polynomials such that the number of variables in $f_n$ and the degree of $f_n$ are both bounded by a polynomial in $n$.
We assume that each \Sp-family computes polynomials over a field $\mathbb{F}$ which will vary in this paper but is fixed for each \Sp-family.

A polynomial $f$ in the variables $X_1, \ldots, X_n$ is a projection of a polynomial $g$, in symbols $f\le g$, if $f(X_1, \ldots, X_n) = g(a_1, \ldots, a_m)$ where the $a_i$ are taken from $\{X_1, \ldots, X_n\} \cup \mathbb{F}$. The first reduction notion we consider in this paper are so-called \Sp-projections: A \Sp-family $(f_n)$ is a \Sp-projection of another \Sp-family $(g_n)$, symbol $(f_n) \le_\Sp (g_n)$ if there is a polynomially bounded function $t$ such that 
\[\exists n_0 \forall n \ge n_0 \colon f_n \le g_{t(n)}.\] 

Intuitively, \Sp-projections appear to be a very weak notion of reductions although surprisingly the bulk of completeness results in arithmetic circuit complexity can be shown with them. For some \Sp-families, though, showing hardness with \Sp-projections appears to be hard, and consequently, a more permissive reduction notion called \Sc-reductions has also been used.

The oracle complexity $L^{g}(f)$ of a polynomial $f$ with oracle $g$ is the minimum number of arithmetic operations $+$, $-$, $\times$, and evaluations of $g$ at previously computed values that are sufficient to compute~$f$ from the variables $X_1, X_2, \ldots$ and constants in $\mathbb{F}$.

Let $(f_n)$ and $(g_n)$ be \Sp-families of polynomials. We call $(f_n)$ a \emph{\Sc-reduction} of $(g_n)$, symbol $(f_n)\le_\Sc (g_n)$, if and only if there is a polynomially bounded function $t:\mathbb{N}\rightarrow \mathbb{N}$ such that the map $n\mapsto L^{g_{t(n)}}(f_n)$ is polynomially bounded.

Intuitively, if $(f_n) \le_\Sc (g_n)$, then we can compute the polynomial in $(f_n)$ with a polynomial number of arithmetic operations and oracle calls
to $g_{t(n)}$, where $t(n)$ is polynomially bounded.

Let $C_n$ denote the group of cyclic cyclic permutations on $n$ symbols and define the $n$th Hamiltonian cycle polynomial $\HC_n$ as $\HC_n := \sum_{\pi \in C_n} \prod_{i=1}^n X_{i,\pi(i)}$.

To keep these preliminaries lightweight, we omit the usual definition of $\vnp$ and instead define $\vnp$ to consist of all \Sp-families $(g_n)$ with $(g_n) \le_\Sp (\HC_n)$.

A \Sp-family $(g_n)$ that satisfies $(f_n)\le_\Sp (g_n)$ for all $f_n \in \vnp$ is called
$\vnp$-hard under \Sp-projections or $\vnp$-\Sp-hard for short.
Analogously,
a \Sp-family $(g_n)$ that satisfies $(f_n)\le_\Sc (g_n)$ for all $f_n \in \vnp$ is called
$\vnp$-hard under \Sc-reductions or $\vnp$-\Sc-hard for short.
If $(g_n)$ is $\vnp$-\Sp-hard and contained in $\vnp$, then $(g_n)$ is call $\vnp$-\Sp-complete.
Analogously for $\vnp$-\Sc-completeness.
Clearly if a family is $\vnp$-\Sp-complete, then it is also $\vnp$-\Sc-complete.

Note that a \Sp-family $(g_n)$ is $\vnp$-\Sp-hard (resp.~$\vnp$-\Sc-hard) iff $(\HC_n)\le_\Sp (g_n)$ (resp.~$(\HC_n)\le_\Sc (g_n)$).

\section{\Sc-reductions are strictly stronger than \Sp-projections}

In this section, we show that there are polynomials that are $\vnp$-\Sc-complete but not $\vnp$-\Sp-complete.
Let $X$ denote a new variable, unused by $\HC_n$ for any $n$.
Define
\[
P_n := X \cdot \HC_n + (\HC_n)^2.
\]

Note that $P_n$ is defined for every field.
We remark that $(P_n)$ can easily be shown to be contained in $\vnp$, because $\HC_n \in \vnp$ and the class $\vnp$ is closed under multiplication and addition~\cite{Valiant82} (see also~\cite[Theorem 2.19]{bur00}).

\begin{lemma}\label{lem:ccomplete}
$(P_n)$ is $\vnp$-\Sc-complete over every field.
\end{lemma}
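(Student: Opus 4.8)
The plan is to establish the two directions of completeness separately. Since $(P_n) \in \vnp$ is already noted in the excerpt, it remains to show that $(P_n)$ is $\vnp$-\Sc-hard, i.e., that $(\HC_n) \le_\Sc (P_n)$. So the whole content of the lemma is to recover the Hamiltonian cycle polynomial from $P_n$ using only a polynomial number of arithmetic operations and oracle calls to $P_{t(n)}$ for some polynomially bounded $t$; here the obvious guess is $t(n) = n$, evaluating the oracle at the same-size instance.

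First I would observe that $P_n = X\cdot \HC_n + (\HC_n)^2$ is, for fixed values of the $X_{i,j}$, a quadratic polynomial in the single extra variable $X$. The natural idea is to query the oracle at two different values of $X$ (say $X=0$ and $X=1$, or more robustly at a couple of field constants) while keeping the $X_{i,j}$ as the formal variables passed through the oracle. Concretely, $P_n|_{X=0} = (\HC_n)^2$ and $P_n|_{X=1} = \HC_n + (\HC_n)^2$, so a single subtraction gives $P_n|_{X=1} - P_n|_{X=0} = \HC_n$. This is already an \Sc-reduction: two oracle calls and one subtraction, with the $X_{i,j}$ variables fed straight into the oracle gate as "previously computed values" (the variables themselves). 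I would make sure the definition of $L^{g}(f)$ indeed allows feeding variables directly into the oracle — the excerpt says evaluations of $g$ "at previously computed values" starting "from the variables $X_1, X_2,\ldots$ and constants," so the variables count as available inputs, and this is fine.

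The one subtlety to address is that the oracle family $(P_n)$ uses the distinguished variable $X$ in addition to the matrix variables, so I must be explicit that the \Sc-reduction substitutes a field constant into the $X$-slot of the oracle and the formal variables $X_{i,j}$ into the remaining slots; this is a legitimate oracle call in the sense of the definition of oracle complexity. I would also remark that this works uniformly over every field, including characteristic $2$, because the construction only uses the constants $0$ and $1$ and one subtraction (equivalently an addition in characteristic $2$), and nothing about the field is needed beyond that. Hence $n \mapsto L^{P_n}(\HC_n) \le 2$ is trivially polynomially bounded with $t(n) = n$, giving $(\HC_n) \le_\Sc (P_n)$, and since $(P_n) \in \vnp$, the family is $\vnp$-\Sc-complete over every field.

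I do not anticipate a genuine obstacle here: the lemma is the "easy half" of the paper, the real work being the later proof that $(P_n)$ is \emph{not} $\vnp$-\Sp-complete. If anything, the only thing to be careful about is the bookkeeping in the definition of \Sc-reductions — confirming that a constant-size straight-line program with oracle gates of the stated form genuinely witnesses $(f_n) \le_\Sc (g_n)$ — but this is a matter of unwinding definitions rather than a mathematical difficulty.
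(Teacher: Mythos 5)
Your proof is correct and is essentially the same as the paper's: the paper also recovers $\HC_n$ as $P_n(X\leftarrow 1)-P_n(X\leftarrow 0)$, i.e.\ two oracle calls with $X$ specialised to constants and one subtraction, which works over every field. The extra care you take about the definition of oracle complexity is fine but not needed beyond what the paper already implicitly uses.
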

\begin{proof}
Fix a field $\mathbb{F}$.
For field elements $\alpha \in \mathbb{F}$
let $P_n(X \leftarrow \alpha)$ denote $P_n$ with variable $X$ set to $\alpha$.
We observe that
\[
P_n(X \leftarrow 1) - P_n(X \leftarrow 0) = \HC_n
\]
and thus
$P_n$ is $\vnp$-\Sc-complete.
\end{proof}

\begin{lemma}\label{lem:notcomplete}
 $(P_n)$ is not $\vnp$-\Sp-complete over any field.
\end{lemma}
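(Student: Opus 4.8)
The plan is to argue by contradiction, combining the definition of $\Sp$-projections with an elementary degree/parity argument. Suppose $(P_n)$ were $\vnp$-$\Sp$-complete over some field $\mathbb{F}$. Then it is $\vnp$-$\Sp$-hard, so by the remark at the end of Section~2 we have $(\HC_n)\le_\Sp (P_n)$: there is a polynomially bounded $t$ and an $n_0$ with $\HC_n \le P_{t(n)}$ for all $n\ge n_0$. Unwinding the definition of projection, for each such $n$ and $m:=t(n)$ there is a substitution of every variable of $P_m$ by an element of $\{X_{i,j}: i\ne j,\ 1\le i,j\le n\}\cup\mathbb{F}$ (the variables of $\HC_n$ together with field constants) that turns $P_m$ into $\HC_n$. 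Since $P_m = X\cdot\HC_m + (\HC_m)^2$, such a substitution sends $X$ to some $a$ that is either one of the variables of $\HC_n$ or a field element, and it turns $\HC_m$ into a single polynomial $Q$ (a substitution instance of $\HC_m$, though we will not need this). The \emph{same} $Q$ appears in both summands after substitution, so we obtain the polynomial identity
\[
\HC_n \;=\; a\,Q + Q^2 \;=\; Q\,(Q+a).
\]

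It then remains to rule this identity out, and here I would specialise to \emph{odd} $n$; since there are infinitely many odd $n\ge\max(n_0,3)$, deriving a contradiction for one of them suffices. For such $n$, $\HC_n$ is a nonzero polynomial, homogeneous of degree $n$, and $n$ is odd. Distinguish two cases according to $\deg Q$. If $\deg Q\le 1$, then $\deg(Q+a)\le 1$ as well, so $\deg\HC_n = \deg\big(Q(Q+a)\big)\le 2 < n$, a contradiction (in the subcase $Q=0$ we even get $\HC_n=0$). If $\deg Q\ge 2$, then, since $a$ has degree at most $1 < \deg Q$, the polynomial $Q+a$ has the same nonzero top-degree homogeneous part as $Q$, hence $\deg(Q+a)=\deg Q$; as the polynomial ring over $\mathbb{F}$ is an integral domain this gives $\deg\HC_n = \deg Q + \deg(Q+a) = 2\deg Q$, which is even, contradicting $n$ odd. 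Either way we reach a contradiction, and since $\mathbb{F}$ was arbitrary, $(P_n)$ is not $\vnp$-$\Sp$-complete over any field.

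The argument is short, and the one point that needs a little care — the main obstacle, such as it is — is the structural observation in the first paragraph: that every projection of $P_m$ is forced into the shape $Q^2 + aQ$ with $a$ a variable or a constant, because the single extra variable $X$ multiplying $\HC_m$ can only be replaced by a variable or a field element, and the two occurrences of $\HC_m$ are substituted identically. Everything after that is elementary and completely field-independent, relying only on the fact that the relevant polynomial ring is an integral domain (so degrees add under multiplication) and that $\HC_n$ is homogeneous of degree $n$. One could alternatively invoke irreducibility of $\HC_n$ for large $n$ to deduce from $\HC_n = Q(Q+a)$ that $Q$ or $Q+a$ is constant and finish from there, but the parity argument has the advantage of not requiring that fact.
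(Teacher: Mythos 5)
Your proof is correct and rests on the same degree-parity idea as the paper's: any projection of $P_m$ has the form $Q^2+aQ$ with $\deg a\le 1$, so it has degree at most $2$ or even degree, and the case split $\deg Q\le 1$ versus $\deg Q\ge 2$ is exactly the one used in the paper. The only (inessential) difference is the witness: you show $\HC_n$ itself, for odd $n$, is not a projection of any $P_m$ and invoke the criterion $(\HC_n)\le_\Sp(P_n)$, whereas the paper shows that a univariate polynomial of odd degree at least $3$ (measuring degree in that single variable) is not a projection of any $P_n$.
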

\begin{proof}
Let $f$ be any univariate polynomial in some variable $Y$ and let $f$ be of odd degree at least 3. We show that $f$ is not a projection of~$P_n$ for any $n$,
which finishes the proof because then the constant \Sp-family $(f)$ is not a \Sp-projection of $(P_n)$.

For a multivariate polynomial $h$ let $\deg_Y(h)$ denote the $Y$-degree of $h$,
which is the degree of $h$ interpreted as a univariate polynomial in $Y$ over the polynomial ring with additional variables.
Let $A$ be an $n \times n$ matrix whose entries are variables and constants.
We denote by $P_n(A)$ the linear projection of $P_n$ given by $A$.
We now analyze $\deg_Y(P_n(A))$.
Clearly $\deg_Y(X(A))\leq 1$.
If $\deg_Y(\HC_n(A)) \leq 1$, then $\deg_Y(P_n(A))\leq 2 < 3 \leq \deg_Y(f)$ and thus $P_n(A)\neq f$.
If $\deg_Y(\HC_n(A)) \geq 2$, then $\deg_Y(P_n(A)) = \deg_Y ((\HC_n(A))^2) = 2 \deg_Y (\HC_n(A))$.
But $\deg_Y(f)$ is an odd number, so in this case we also have $P_n(A)\neq f$.
\end{proof}

As a corollary we get that  \Sc-reductions yield strictly more complete problems that \Sp-projections.

\begin{theorem}\label{thm:main}
 For every field $\mathbb{F}$, $(P_n)$ is $\vnp$-\Sc-complete over $\mathbb{F}$, but not $\vnp$-\Sp-complete over $\mathbb{F}$.
\end{theorem}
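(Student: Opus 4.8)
The plan is to derive Theorem~\ref{thm:main} directly from the two preceding lemmas, since together they assert precisely the two halves of the statement. First I would note that Lemma~\ref{lem:ccomplete} already establishes, for an arbitrary field $\mathbb{F}$, that $(P_n)$ is $\vnp$-\Sc-complete over $\mathbb{F}$: membership in $\vnp$ was observed right after the definition of $P_n$ (closure of $\vnp$ under $+$ and $\times$ applied to $\HC_n \in \vnp$), and $\vnp$-\Sc-hardness follows from the identity $P_n(X\leftarrow 1) - P_n(X\leftarrow 0) = \HC_n$, which exhibits $\HC_n$ as computable from $P_n$ with two oracle calls and one subtraction, hence $(\HC_n)\le_\Sc(P_n)$ with $t=\mathrm{id}$.

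For the second half I would invoke Lemma~\ref{lem:notcomplete}: over no field is $(P_n)$ $\vnp$-\Sp-complete. The point to stress here is that \Sp-completeness would in particular force \emph{every} \Sp-family, including the constant family $(f)$ for a fixed univariate $f$, to be a \Sp-projection of $(P_n)$; but choosing $f$ of odd degree at least $3$ and analyzing the $Y$-degree of an arbitrary linear substitution $P_n(A)$ yields a contradiction. Concretely, the $X$-part contributes $Y$-degree at most $1$, so if $\deg_Y(\HC_n(A))\le 1$ the whole polynomial has $Y$-degree at most $2<3$, while if $\deg_Y(\HC_n(A))\ge 2$ the dominant term is $(\HC_n(A))^2$, of even $Y$-degree $\ge 4$; either way the $Y$-degree cannot equal the odd number $\deg_Y(f)$.

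Combining the two, for every $\mathbb{F}$ the family $(P_n)$ is $\vnp$-\Sc-complete but not $\vnp$-\Sp-complete over $\mathbb{F}$, which is the theorem; the corollary that \Sc-reductions produce strictly more complete families than \Sp-projections is then immediate. The only genuine obstacle in this development lies in Lemma~\ref{lem:notcomplete} --- specifically the realization that the squaring in the definition of $P_n$ is exactly what destroys odd $Y$-degrees, since it forces the attainable $Y$-degrees to lie in $\{0,1,2\}\cup 2\mathbb{Z}_{\ge 2}$; once the right invariant (the $Y$-degree together with the doubling effect of squaring) and the right witness (an odd-degree univariate, of degree $\ge 3$ to clear the low-degree exceptions) are identified, the argument is routine and entirely field-independent. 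Assembling the theorem from the lemmas presents no further difficulty.
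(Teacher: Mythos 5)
Your proposal is correct and follows the paper's route exactly: the theorem is obtained by combining Lemma~\ref{lem:ccomplete} (via the interpolation $P_n(X\leftarrow 1)-P_n(X\leftarrow 0)=\HC_n$) with Lemma~\ref{lem:notcomplete} (the odd-$Y$-degree obstruction for a univariate witness of degree at least $3$), and your degree analysis of $P_n(A)$ matches the paper's. Nothing is missing.
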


\section{The dependence on the field}\label{sct:dependence}
In this section we construct a family $(Q_n)$ of polynomials that is
is $\vnp$-\Sc-complete over all fields with more than two elements,
but over the real numbers $(Q_n)$ is not $\vnp$-\Sp-complete.
This shows that the relative power of different reductions notions depends on the field and is thus likely quite complicated to characterize in general.

We consider the polynomials $Q_n$ defined on the matrix $(X_{ij})_{i,j\in [n]}$ defined by \[Q_n :=\sum_{\pi \in C_n} \prod_{i\in [n]} X_{i,\pi(i)} + \sum_{\pi \in C_n} \prod_{i\in [n]} X_{i,\pi(i)}^2.\] Note that $Q_n$ is similar to the polynomial $P_n$ considered before.
But unlike $P_n$ the homogeneous part of degree $n^2$ of $Q_n$ is \emph{not} $(\HC_n)^2$ but only contains a subset of the monomials.

Using Valiant's criterion, it is easy to see that $(Q_n) \in \vnp$, see for example \cite{bur00}[Proposition 2.20].

Although from its algebraic properties $Q_n$ might look very different from $P_n$,
the following Lemma can be proved exactly as Lemma~\ref{lem:ccomplete}.

\begin{lemma}\label{lem:ccompleteQ}
$(Q_n)$ is $\vnp$-\Sc-complete over every field with more than 2 elements.
\end{lemma}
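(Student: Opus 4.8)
The plan is to run the proof of Lemma~\ref{lem:ccomplete} almost verbatim, with the role of the extra variable $X$ played by a global rescaling of all matrix entries. For a scalar $\beta$ write $\beta X$ for the matrix obtained from $X=(X_{ij})_{i,j\in[n]}$ by replacing every entry $X_{ij}$ by $\beta X_{ij}$. Since every monomial of the first sum of $Q_n$ has degree $n$ and every monomial of the second sum has degree $2n$, we get, for all $\beta\in\mathbb{F}$,
\[
Q_n(\beta X)=\beta^{n}\sum_{\pi\in C_n}\prod_{i\in[n]}X_{i,\pi(i)}+\beta^{2n}\sum_{\pi\in C_n}\prod_{i\in[n]}X_{i,\pi(i)}^2=\beta^{n}\,\HC_n+\beta^{2n}R_n,
\]
where $R_n$ denotes the second sum. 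Consequently, as soon as we can choose $\beta\in\mathbb{F}\setminus\{0\}$ with $\beta^{n}\neq 1$, two oracle calls $Q_n(\beta X)$ and $Q_n(X)$ plus $O(n^2)$ further arithmetic operations (to form $\beta X$ and take the linear combination) produce
\[
\HC_n=\frac{Q_n(\beta X)-\beta^{2n}\,Q_n(X)}{\beta^{n}-\beta^{2n}},
\]
the denominator $\beta^{n}-\beta^{2n}=\beta^{n}(1-\beta^{n})$ being a nonzero constant; this is precisely the analogue of $P_n(X\leftarrow 1)-P_n(X\leftarrow 0)=\HC_n$.

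It therefore only remains to secure, for every $n$, a nonzero $\beta$ that is not an $n$-th root of unity. If $\mathbb{F}$ is infinite this is automatic, since $\beta^{n}=1$ has at most $n$ solutions while $\mathbb{F}^{\times}$ is infinite. If $\mathbb{F}=\mathbb{F}_q$ is finite, then $x\mapsto x^{n}$ maps $\mathbb{F}_q^{\times}$ onto a subgroup of order $(q-1)/\gcd(n,q-1)$, which is nontrivial --- hence a good $\beta$ exists --- exactly when $(q-1)\nmid n$. The only remaining case is $\mathbb{F}=\mathbb{F}_q$ finite with $(q-1)\mid n$; there every nonzero element is an $n$-th root of unity, $Q_n(\beta X)=Q_n(X)$ for all $\beta\in\mathbb{F}_q^{\times}$, and the rescaling argument collapses. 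Handling this case is the one real obstacle beyond Lemma~\ref{lem:ccomplete}, and it is also the reason the hypothesis $|\mathbb{F}|>2$ cannot be dropped: over $\mathbb{F}_2$ this degenerate case occurs for \emph{every} $n$.

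To deal with it I would use that the function $t$ in the definition of a \Sc-reduction need not be the identity and pass from $Q_n$ to $Q_{n+1}$: since $q-1\ge 2$ and no two consecutive integers are both divisible by $q-1$, $(q-1)\mid n$ forces $(q-1)\nmid n+1$, so the first paragraph applies to $Q_{n+1}$. What one needs in addition is that $\HC_n$ is a \Sp-projection of $\HC_{n+1}$, which follows from a standard edge-subdivision gadget: introduce a fresh vertex $n+1$ and reroute every edge entering vertex $1$ through it, i.e. set $Z_{i,n+1}:=X_{i,1}$ and $Z_{i,1}:=0$ for $i\in[n]$, $Z_{n+1,1}:=1$, $Z_{n+1,j}:=0$ for $j\neq 1$, and $Z_{ij}:=X_{ij}$ otherwise; contracting the degree-$2$ vertex $n+1$ gives a weight-preserving bijection between the cyclic permutations of $[n+1]$ with nonzero contribution and all of $C_n$, so $\HC_{n+1}(Z)=\HC_n(X)$. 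Composing this projection with the rescaling identity for $Q_{n+1}$ computes $\HC_n$ from the variables $X_{ij}$ using two calls to $Q_{n+1}$ and $O(n^2)$ arithmetic operations, which yields $(\HC_n)\le_\Sc(Q_n)$ with $t(n)\le n+1$. Together with the already-noted membership $(Q_n)\in\vnp$, this establishes $\vnp$-\Sc-completeness of $(Q_n)$ over every field with more than two elements. I expect the bulk of the actual writing to be the (routine) verification of the subdivision gadget and a clean statement of the case distinction on $\mathbb{F}$.
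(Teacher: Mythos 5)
Your proof is correct, but it takes a genuinely different (and more laborious) route than the paper. You rescale the \emph{whole} matrix, so the two homogeneous parts of $Q_n$ pick up factors $\beta^{n}$ and $\beta^{2n}$ that depend on $n$; this forces the root-of-unity case analysis over finite fields $\mathbb{F}_q$ with $(q-1)\mid n$, which you then repair by shifting to $Q_{n+1}$ (exploiting that $t$ need not be the identity) together with a correct edge-subdivision projection giving $\HC_n \le \HC_{n+1}$. The paper avoids all of this by rescaling only the \emph{first row} of the matrix by a single element $a\notin\{0,1\}$: since every Hamiltonian cycle uses exactly one entry of the first row, the degree-$n$ part scales by $a$ and the squared part by $a^2$, independently of $n$, so
\[
\HC_n(X) \;=\; \tfrac{1}{a-a^2}\,Q_n(\bar X) \;-\; \tfrac{a^2}{a-a^2}\,Q_n(X)
\]
with the same $a$ for every $n$, no case distinction on the field, and no index shift; as the paper notes, this even exhibits the reduction as a linear $p$-projection. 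Your version buys a nice illustration of why $t(n)\neq n$ can be useful and of the padding gadget for $\HC$, at the cost of extra bookkeeping; the paper's version buys uniformity and the slightly stronger form of the reduction. One small presentational point: your claim that the hypothesis $|\mathbb{F}|>2$ ``cannot be dropped'' because the rescaling collapses over $\mathbb{F}_2$ only shows that \emph{your} rescaling argument fails there, not that the lemma itself fails over $\mathbb{F}_2$ — phrase it as a limitation of the method, not of the statement.
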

\begin{proof}
 Fix a field $\FF$ with more than 2 elements.
 The proof is a simple interpolation argument.
 Choose $a \in \FF$ with $a \notin\{0,1\}$.
For a variable matrix
\[
X=\begin{pmatrix}
                         X_{1,1} & X_{1,2} & \cdots & X_{1,n}\\
                         X_{2,1} & X_{2,2} & \cdots & X_{2,n}\\
                         \vdots & \vdots & \ddots & \vdots \\
                         X_{n,1} & X_{n,2} & \cdots & X_{n,n}
                         \end{pmatrix}
\]
let $\bar X$ denote $X$ with the first row scaled by $a$:
\[
\bar X=\begin{pmatrix}
                         a X_{1,1} & a X_{1,2} & \cdots & a X_{1,n}\\
                         X_{2,1} & X_{2,2} & \cdots & X_{2,n}\\
                         \vdots & \vdots & \ddots & \vdots \\
                         X_{n,1} & X_{n,2} & \cdots & X_{n,n}
                         \end{pmatrix}.
\]
Clearly $\HC_n(\bar X) = a \HC_n(X)$.
Moreover,
\[
Q_n(\bar X) = a \HC_n(X) + a^2 \sum_{\pi \in C_n} \prod_{i\in [n]} X_{i,\pi(i)}^2.
\]
Therefore
\[
(a-a^2)\HC_n(X) = Q_n(\bar X) - a^2 Q_n(X).
\]
But $a-a^2 = a(1-a) \neq 0$ because $a \notin \{0,1\}$.
We conclude
\[
\HC_n(X) = \frac{1}{a-a^2} Q_n(\bar X) - \frac{a^2}{a-a^2}Q_n(X).
\]

It follows that $Q_n$ is even $\vnp$-\Sc-complete under linear $p$-projections, a restricted form of $c$-reductions (see \cite[p. 54]{bur00}).
\end{proof}

We now show that over the real numbers Lemma~\ref{lem:ccompleteQ} cannot be improved from \Sc-reductions to \Sp-projections.

\begin{lemma}\label{lem:notcompleteQ}
 $(Q_n)$ is not $\vnp$-\Sp-complete over $\mathbb{R}$.
\end{lemma}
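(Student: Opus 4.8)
The plan is to mimic the proof of Lemma~\ref{lem:notcomplete}, but since the degree-$n^2$ part of $Q_n$ is no longer a perfect square we need a different invariant to exploit. The idea is to use the field $\mathbb{R}$ in an essential way: over $\mathbb{R}$ a sum of squares of real polynomials vanishes only if each summand vanishes. Concretely, I would again fix a univariate target polynomial $f$ in a variable $Y$, now chosen so that $f$ takes a negative value somewhere, say $f(Y) = -1 - Y^2$ or simply the constant polynomial $f = -1$ (constant $\mathsf{p}$-families are legitimate). I claim no such $f$ is a projection of any $Q_n$.

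Suppose for contradiction that $f = Q_n(A)$ for some $n\times n$ matrix $A$ whose entries lie in $\{Y\}\cup\mathbb{R}$. Write $Q_n = L + S$ where $L := \sum_{\pi\in C_n}\prod_i X_{i,\pi(i)}$ is the Hamiltonian cycle polynomial and $S := \sum_{\pi\in C_n}\prod_i X_{i,\pi(i)}^2$. The key point is that under the substitution $A$, the second summand $S(A) = \sum_{\pi\in C_n}\bigl(\prod_i A_{i,\pi(i)}\bigr)^2$ is a sum of squares of univariate real polynomials, hence $S(A)(y)\ge 0$ for every $y\in\mathbb{R}$. Similarly, every monomial $\prod_i A_{i,\pi(i)}$ contributing to $L(A)$ appears squared in $S(A)$; so if $L(A)$ is nonzero then at least one of these square terms is a nonzero polynomial, and then $S(A)$ is a nonzero sum of squares, which over $\mathbb{R}$ means $S(A)(y)\to +\infty$ as $|y|\to\infty$ and in particular $S(A)$ has even degree $\ge 2$ and positive leading coefficient. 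Meanwhile $L(A)$ has degree at most that of $S(A)$'s relevant terms... the precise bookkeeping is where care is needed, but the upshot is: either $L(A)=0$, in which case $Q_n(A)=S(A)$ is a nonnegative (sum-of-squares) polynomial and cannot equal a polynomial that goes to $-\infty$ or is the negative constant; or $L(A)\ne 0$, in which case the top-degree behaviour of $Q_n(A)$ is governed by $S(A)$, forcing $Q_n(A)$ to be nonnegative for all large $|y|$ — again contradicting that $f$ is negative somewhere (or everywhere, if $f=-1$).

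To make the $L(A)=0$ versus $L(A)\ne 0$ dichotomy clean, I would argue as follows. Collect the monomials: for each $\pi\in C_n$ let $m_\pi := \prod_{i}A_{i,\pi(i)}\in\mathbb{R}[Y]$. Then $Q_n(A) = \sum_\pi m_\pi + \sum_\pi m_\pi^2 = \sum_\pi (m_\pi + m_\pi^2)$, and over $\mathbb{R}$ we have $m_\pi + m_\pi^2 = (m_\pi + \tfrac12)^2 - \tfrac14 \ge -\tfrac14$ pointwise for each $\pi$. Since there are $(n-1)!$ permutations in $C_n$, this gives $Q_n(A)(y) \ge -(n-1)!/4$ for all $y\in\mathbb{R}$. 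Hence $Q_n(A)$ is a real polynomial bounded below on all of $\mathbb{R}$. But the constant polynomial $f$ is not of this form unless... wait, a constant is bounded below — so instead I take $f(Y) := -Y^2$, or more robustly any real univariate polynomial that is \emph{not} bounded below on $\mathbb{R}$, e.g.\ $f(Y) = -Y^2$ (odd or even degree is irrelevant now; what matters is unboundedness below). Then $(f)$ is a legal $\mathsf{p}$-family, it is in $\vnp$, it is not a projection of any $Q_n$ since $Q_n(A)$ is always bounded below, and therefore $(Q_n)$ is not $\vnp$-$\mathsf{p}$-complete over $\mathbb{R}$.

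The main obstacle — really the only subtle point — is establishing the pointwise lower bound $Q_n(A)(y)\ge -(n-1)!/4$ rigorously, i.e.\ checking that a projection of $Q_n$ over $\mathbb{R}$ is literally a sum of $(n-1)!$ terms each of the shape $g + g^2$ with $g\in\mathbb{R}[Y]$, and hence bounded below by a constant independent of $y$; once that is in hand, exhibiting an unbounded-below univariate $f\in\vnp$ that cannot be such a projection is immediate. I should also double-check that this argument genuinely uses $\mathbb{R}$ (or more generally a real field) and not, say, $\mathbb{C}$: over $\mathbb{C}$ the bound $g+g^2\ge -1/4$ fails because there is no order, which is consistent with the fact claimed earlier that $(Q_n)$ \emph{is} $\vnp$-$\mathsf{p}$-complete over $\mathbb{C}$.
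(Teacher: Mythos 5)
Your proof is correct, but it takes a genuinely different route from the paper's. The paper fixes the witness $X$, a polynomial of degree one, and runs a degree/leading-coefficient argument: for a substitution matrix $A$ over $\{X\}\cup\mathbb{R}$, pick a cycle $\sigma$ whose product $m_\sigma$ has maximal degree (at least $1$ if $Q_n(A)$ is to equal $X$); then $m_\sigma^2$ has degree at least $2$ and its top term cannot cancel, neither against the unsquared products (lower degree) nor against the other squared products (their leading coefficients are positive over $\mathbb{R}$), so $\deg Q_n(A)\ge 2\neq 1$. You instead extract a pointwise invariant: since $Q_n(A)=\sum_{\pi\in C_n}\bigl(m_\pi+m_\pi^2\bigr)$ and $g+g^2\ge -\tfrac14$ pointwise over $\mathbb{R}$, every projection of $Q_n$ is bounded below by $-(n-1)!/4$ on $\mathbb{R}$, so an unbounded-below witness such as $-Y^2$ cannot be a projection of any $Q_n$; as that constant family lies in $\vnp$, $(Q_n)$ is not $\vnp$-\Sp-hard. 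The one step you flagged, that a projection of $Q_n$ really is a sum of $(n-1)!$ terms of the form $g+g^2$, is immediate because $\prod_i A_{i,\pi(i)}^2=\bigl(\prod_i A_{i,\pi(i)}\bigr)^2$, so nothing is missing; note also that your invariant already excludes the paper's own witness (the linear polynomial is unbounded below), so the detour through $-Y^2$ is harmless but unnecessary. Both arguments hinge on the order of $\mathbb{R}$, as they must in view of Corollary~\ref{cor:complete2Q}; the paper's version additionally shows that even a single variable is not a projection of $Q_n$, while yours is a somewhat more transparent positivity argument that transfers verbatim to any ordered field and avoids the bookkeeping about which terms could cancel. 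The exploratory middle portion of your write-up (the dichotomy on $L(A)$ and the attempt with $f=-1$) is superseded by your final clean argument and should simply be dropped.
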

\begin{proof}
 We show that the polynomial $X$ is not a projection of~$Q_n$ for any $n$. Assume this were not the case.
 Then there is an $(n\times n)$-matrix $A=(a_{ij})$ whose entries are variables or constants such that $P_n(A) = X$.
 W.l.o.g.\ we assume that no other variables than $X$ appear in $A$, so $a_{ij} \in \{X\}\cup \mathbb{R}$.
 Let $\sigma \in C_n$ be an $n$-cycle such that
$\prod_{i=1}^n a_{i\sigma(i)}$ has maximal degree. Obviously this degree is at least $1$. Then the monomial $\prod_{i=1}^n a_{i\sigma(i)}^2$ has at least degree $2$ and it cannot cancel out in $Q_n$ because
\begin{itemize}
 \item it cannot cancel with any $\prod_{i=1}^n a_{i\mu(i)}$ for an $n$-cycle $\mu$, because those all have smaller degrees, and
 \item it cannot cancel out with any $\prod_{i=1}^n a_{i\mu(i)}^2$, because those all have positive coefficients in $Q_n(A)$.
\end{itemize}

Thus $Q_n(A)$ has degree at least $2$, which implies that $Q_n(A)\neq X$.
\end{proof}

Interestingly, Lemma \ref{lem:notcompleteQ} does not generalize to arbitrary fields.

\begin{lemma}\label{lem:completeQ}
 Let $\mathbb{F}$ be a field such that there are elements $a_1, \ldots, a_s$ with $\sum_{i=1}^s a_i \ne 0$ and $\sum_{i=1}^s a_i^2 = 0$. Then $(Q_n)$ is $\vnp$-\Sp-complete over $\mathbb{F}$.
\end{lemma}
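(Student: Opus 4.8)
The plan is to show $(\HC_n)\le_\Sp (Q_n)$, which together with $(Q_n)\in\vnp$ gives $\vnp$-$\Sc$- wait, $\vnp$-$\Sp$-completeness. After replacing each $a_i$ by $a_i/(\sum_j a_j)\in\mathbb{F}$ (which leaves the identity $\sum_i a_i^2 = 0$ intact) we may assume $\sum_{i=1}^s a_i = 1$. For $n\ge 3$ I put $m := n+s+1$ and will construct an $m\times m$ matrix $A$ all of whose entries lie in $\{X_{ij}:i,j\in[n]\}\cup\mathbb{F}$ and satisfy $Q_m(A) = \HC_n(X)$; this is exactly what is needed.

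The matrix $A$ will be the weighted adjacency matrix of the following digraph $G$ on vertex set $\{2,\dots,n\}\cup\{1_{\mathrm{in}},1_{\mathrm{out}}\}\cup\{g_1,\dots,g_s\}$: arcs $u\to w$ of weight $X_{u,w}$ for all distinct $u,w\in\{2,\dots,n\}$; arcs $u\to 1_{\mathrm{in}}$ of weight $X_{u,1}$ for $u\in\{2,\dots,n\}$; arcs $1_{\mathrm{out}}\to w$ of weight $X_{1,w}$ for $w\in\{2,\dots,n\}$; arcs $1_{\mathrm{in}}\to g_k$ of weight $a_k$; arcs $g_k\to g_{k+1}$ of weight $1$ (indices mod $s$); and arcs $g_k\to 1_{\mathrm{out}}$ of weight $1$. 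Every other entry of $A$ is $0$. Intuitively, $G$ is the complete digraph on $[n]$ with vertex $1$ replaced by a gadget that can be entered only at $1_{\mathrm{in}}$, left only at $1_{\mathrm{out}}$, and that admits exactly $s$ traversals, the $k$-th of weight $a_k$.

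The key step is to pin down the Hamiltonian cycles of $G$. Since the in-neighbours of $1_{\mathrm{in}}$ are precisely $\{2,\dots,n\}$, its out-neighbours are precisely the $g_k$, the in-neighbours of $1_{\mathrm{out}}$ are precisely the $g_k$, and its out-neighbours are precisely $\{2,\dots,n\}$, any Hamiltonian cycle of $G$ must enter $1_{\mathrm{in}}$ from some vertex $u$, pass through the gadget, leave $1_{\mathrm{out}}$ to some vertex $w$, and then return to $u$ along a Hamiltonian path of $\{2,\dots,n\}$. A short case analysis on the in-/out-neighbours of the $g_k$ (this is the one genuinely technical point) shows that the only Hamiltonian paths from $1_{\mathrm{in}}$ to $1_{\mathrm{out}}$ through all of $g_1,\dots,g_s$ are the $s$ rotations $1_{\mathrm{in}}\to g_k\to g_{k+1}\to\cdots\to g_{k-1}\to 1_{\mathrm{out}}$, the $k$-th one having weight $a_k$. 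Consequently the Hamiltonian cycles of $G$ are exactly, for $\rho\in C_n$ and $k\in[s]$, the cycle using the $k$-th gadget traversal and otherwise following $\rho$ with vertex $1$ contracted away; here the hypothesis $n\ge 3$ guarantees $\rho^{-1}(1)\ne\rho(1)$, so deleting vertex $1$ from $\rho$ really does produce a Hamiltonian path of $\{2,\dots,n\}$ from $\rho(1)$ to $\rho^{-1}(1)$. Tracking the weights, this cycle contributes $a_k\prod_{i=1}^n X_{i,\rho(i)}$.

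It then remains to add up. The first summand of $Q_m(A)$ is $\sum_{\rho\in C_n}\sum_{k=1}^s a_k\prod_{i=1}^n X_{i,\rho(i)} = (\sum_k a_k)\HC_n(X) = \HC_n(X)$, while the second summand is $\sum_{\rho\in C_n}\sum_{k=1}^s a_k^2\prod_{i=1}^n X_{i,\rho(i)}^2 = (\sum_k a_k^2)\sum_{\rho\in C_n}\prod_{i=1}^n X_{i,\rho(i)}^2 = 0$ by hypothesis. Hence $Q_m(A) = \HC_n(X)$, and since $m = n+s+1$ is linear in $n$ with $s$ a constant depending only on $\mathbb{F}$, this gives $(\HC_n)\le_\Sp(Q_n)$; together with $(Q_n)\in\vnp$ this proves $(Q_n)$ is $\vnp$-$\Sp$-complete over $\mathbb{F}$. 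I expect the verification of the exact list of Hamiltonian cycles of $G$ to be the only part requiring care; everything else is bookkeeping.
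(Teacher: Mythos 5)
Your proof is correct and is essentially the paper's own argument: the paper likewise splits vertex $1$ of the $\HC_n$ graph and splices in a constant gadget consisting of a directed $s$-cycle entered with weights $a_i$ (scaled by $(\sum_j a_j)^{-1}$, which plays the role of your normalization), so that the degree-$n$ part of $Q$ acquires the factor $\sum_i a_i \ne 0$ while the squared part acquires $\sum_i a_i^2 = 0$. The only difference is packaging: the paper first proves a general product formula $Q_{s+t+2}(\con(A,B)) = \HC(A)\HC(B) + \HC(A^{(2)})\HC(B^{(2)})$ for its ``Hamiltonian connection matrix'' and then specializes one factor to the constant gadget, whereas you verify the combined construction directly.
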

\begin{proof}
For an $(n\times n)$-matrix $A$ let $\HC(A)$ be the Hamiltonian cycle polynomial evaluated at $A$ and set $\HC(A^{(2)}) := \sum_{\sigma\in C_n} \prod_{i=1}^n a_{i\sigma(i)}^2$. With this notation clearly $Q_n(A) = \HC(A) + \HC(A^{(2)})$.
>From an $(s\times s)$-matrix $A$ and a $(t\times t)$-matrix $B$ we construct the $(s+t+2)\times(s+t+2)$ \emph{Hamiltonian connection matrix} $\con(A,B)$
as follows.
Let $G_A$ be the labeled digraph with adjacency matrix $A$ and let $G_B$ be the labeled digraph with adjacency matrix $B$.
The vertex corresponding to the first row and column in $A$ is called $v_A$, analogously for $v_B$.
The labeled digraph $G_A'$ is defined by replacing $v_A$ in $G_A$ by two vertices $v_A^{\text{in}}$ and $v_A^{\text{out}}$
such that the edges going into $v_A$ now go into $v_A^{\text{in}}$
and the edges coming out of $v_A$ now come out of $v_A^{\text{out}}$.
This operation increases the total number of vertices by one: $|V(G_A)|+1=|V(G_A')|$.

We create a labeled digraph $G_{\con(A,B)}$ as the union of $G_A'$ and $G_B'$ with two additional edges,
one going from $v_A^{\text{in}}$ to $v_B^{\text{out}}$ and the other from
$v_B^{\text{in}}$ to $v_A^{\text{out}}$, both labelled with 1.
Let $\con(A,B)$ denote the $(s+t+2)\times(s+t+2)$ adjacency matrix of $G_{\con(A,B)}$.

By construction we have a bijection between the set of Hamiltonian cycles in $G_{\con(A,B)}$ and
the set of pairs $(c_A,c_B)$ of Hamiltonian cycles $c_A$ in $G_A$ and $c_B$ in $G_B$.
Thus $\HC(\con(A,B))=\HC(A)\HC(B)$ and $\HC(\con(A,B)^{(2)})=\HC(A^{(2)})\HC(B^{(2)})$.
Therefore
\begin{equation}\label{eqn:decompose}
Q_{s+t+2}(\con(A,B)) = \HC(A)\HC(B) + \HC(A^{(2)})\HC(B^{(2)}).
\end{equation}

Let $a:= \sum_{i=1}^s a_i$ and  \[A:= \begin{pmatrix} 0 & a^{-1} & a^{-1} & \ldots & a^{-1} & a^{-1}\\
a_1 & 0 & 0 & \ldots & 0 & 1 \\
a_2 & 1 & 0 & \ldots & 0 & 0 \\
a_3 & 0 & 1 & \ldots & 0 & 0 \\
\vdots & \vdots & \vdots & \ddots & \vdots & \vdots \\
a_s & 0 & 0 & \ldots & 1 & 0 \\
\end{pmatrix}
\]
It is easy to verify that $\HC(A) = \sum_{i=1}^s a_i a^{-1} = 1$ and $\HC(A^2) = \sum_{i=1}^s a_i^2 (a^{-1})^2 = \left(\sum_{i=1}^s a_i^2\right) a^{-2} = 0$.

Thus we get with (\ref{eqn:decompose})
\[Q_{s+t+2}(\con(A,B)) = \HC(A)\HC(B) + \HC(A^{(2)})\HC(B^{(2)}) = \HC(B) \]
for every $(t\times t)$-matrix $B$.

Thus the Hamiltonian cycle family $(\HC_n)$ is a \Sp-projection of $(Q_n)$ and the claim follows.
\end{proof}

\begin{corollary}\label{cor:complete2Q}
 \begin{enumerate}%[label={\alph*)}]
  \item[a)] $(Q_n)$ is $\vnp$-\Sp-complete over $\mathbb{C}$.
  \item[b)] $(Q_n)$ is $\vnp$-\Sp-complete over any field of characteristic greater than $2$.
 \end{enumerate}
\end{corollary}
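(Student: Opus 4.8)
The plan is to derive both statements directly from Lemma~\ref{lem:completeQ}: in each case I will produce field elements $a_1,\dots,a_s\in\mathbb{F}$ with $\sum_{i=1}^s a_i\neq 0$ and $\sum_{i=1}^s a_i^2=0$, which by that lemma already yields $\vnp$-\Sp-completeness of $(Q_n)$ over $\mathbb{F}$. Both cases reduce to a single sufficient condition: it is enough to exhibit $x,y\in\mathbb{F}$ with $x^2+y^2=-1$ and $1+x+y\neq 0$, since then $(a_1,a_2,a_3)=(1,x,y)$ works (for part a) one may even take $y=0$).

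For part a), over $\mathbb{C}$ take $x=\sqrt{-1}$ and $y=0$; then $x^2+y^2=-1$ and $1+x+y=1+\sqrt{-1}\neq 0$, so Lemma~\ref{lem:completeQ} applies. It is worth remarking that this works verbatim over any field containing a square root of $-1$, for instance over every finite field of order $\equiv 1\pmod 4$.

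For part b), let $p>2$ be the characteristic of $\mathbb{F}$, so that the prime field $\mathbb{F}_p$ sits inside $\mathbb{F}$ and it suffices to find $x,y$ already in $\mathbb{F}_p$. First I would check that $-1$ is a sum of two squares in $\mathbb{F}_p$ by the usual pigeonhole argument: the set $S=\{t^2:t\in\mathbb{F}_p\}$ of squares has exactly $(p+1)/2$ elements, hence so does $-1-S=\{-1-s:s\in S\}$, and since $(p+1)/2+(p+1)/2=p+1>p$ the two subsets of $\mathbb{F}_p$ must intersect, giving $x,y\in\mathbb{F}_p$ with $x^2+y^2=-1$. It remains to ensure $1+x+y\neq 0$; for this, observe that $(-x,-y)$ also satisfies $(-x)^2+(-y)^2=-1$, while $(1+x+y)+(1-x-y)=2\neq 0$ because $p>2$, so at least one of the two sign choices makes the coordinate sum nonzero. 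Replacing $(x,y)$ by $(-x,-y)$ if necessary, Lemma~\ref{lem:completeQ} applies and part b) follows.

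The argument is short and the appeal to Lemma~\ref{lem:completeQ} is immediate; the only point that needs a little care is the last step of part b), where the naive choice $(1,x,y)$ may have vanishing coordinate sum and one has to flip signs to recover the hypothesis $\sum_i a_i\neq 0$. I do not foresee any other difficulty.
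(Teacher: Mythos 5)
Your proposal is correct, and both parts do follow from Lemma~\ref{lem:completeQ} as you intend. For part a) you are doing essentially what the paper does: the paper takes $s=2$ with $(a_1,a_2)=(1,i)$, while you take $(1,i,0)$; the extra zero entry is harmless for the lemma's construction. For part b) your route is genuinely different. The paper avoids any number theory: it takes $s=p$ elements, namely $(p-1)/2$ copies of $1$ and $(p+1)/2$ copies of $-1$, so that the sum is $-1\neq 0$ while the sum of squares is $p\cdot 1=0$ in characteristic $p$. You instead invoke the classical fact that $-1$ is a sum of two squares in $\mathbb{F}_p$ (via the pigeonhole count $|S|=|{-1-S}|=(p+1)/2$), and then repair the side condition $1+x+y\neq 0$ by the sign flip $(x,y)\mapsto(-x,-y)$, using $2\neq 0$; all of these steps check out. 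What your version buys is a fixed, small witness ($s=3$ independent of $p$), and it actually proves a slightly more general statement: $(Q_n)$ is $\vnp$-\Sp-complete over any field in which $-1$ is a sum of two squares (e.g.\ your remark about finite fields with a square root of $-1$), which subsumes both a) and b). What the paper's choice buys is maximal elementarity -- no counting argument and no case distinction -- at the cost of a witness of size $s=p$ depending on the characteristic. Either way the appeal to Lemma~\ref{lem:completeQ} is immediate, so your proof is complete as written.
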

\begin{proof}
 a) Set $s:=2$ and $a_1=1$ and $a_2 =i$. We have $a_1 + a_2 =1+i \neq 0$ and $a_1^2 + a_2^2 = 0$ and thus the claim follows by Lemma \ref{lem:completeQ}.
 
b) Let $p>2$ be the characteristic of the field and set $s:=p$. We have \[\sum_{i=1}^{\frac{p-1}{2}} 1 + \sum_{i=1}^{\frac{p+1}{2}} (-1) = -1 \ne 0\] and \[\sum_{i=1}^{\frac{p-1}{2}} 1^2 + \sum_{i=1}^{\frac{p+1}{2}} (-1)^2 = p\cdot 1 = 0.\] With Lemma \ref{lem:completeQ} the claim follows.
\end{proof}

\section{Conclusion}

We have shown that for all fields \Sc-reductions and \Sp-projections differ in power. 
Note that one could show versions of Theorem~\ref{thm:main} for essentially all other complexity classes from arithmetic circuit complexity, as long as they contain complete families of homogeneous polynomials and the polynomial~$X$. Since the proofs are essentially identical, we have not shown these results here.

We have also shown that the question which families are complete under which reductions also depends on the field. This indicates that understanding the exact power of different reduction notions is probably very complicated.

Another question is with respect to the naturalness of our separating examples. They have been specifically designed for our results and apart from that we do not consider them very interesting. Can one show that the more natural polynomials in \cite{BriquelK09,Rugy-Altherre12,DurandMMRS14} which were shown to be complete under \Sc-reductions are not complete under \Sp-projections?

\newcommand{\etalchar}[1]{$^{#1}$}

\end{document}